\documentclass[letterpaper, 10 pt, conference]{ieeeconf}

\usepackage{amsmath}
\usepackage{amssymb}
\usepackage{amsfonts}
\usepackage{graphicx}
\usepackage{float}
\usepackage{graphics}
\usepackage{graphicx}
\usepackage{theorem}
\usepackage{euscript}
\usepackage{psfrag}


\newcommand{\abs}[1]{\left\vert#1\right\vert}

\newcommand{\vct}[1]{\boldsymbol{#1}}
\newcommand{\Mat}[1]{\boldsymbol{#1}}


\DeclareMathOperator{\diag}{diag}


\def\E{{\mathbb E}}        

\newcommand{\Sinf}{S_\infty}
\newcommand{\Linf}{\mathcal{L}_\infty}

\newcommand{\myscale}{0.28}

\DeclareMathOperator{\tr}{tr}

\theoremstyle{plain}
\newtheorem{thm}{Theorem}
\newtheorem{cor}[thm]{Corollary}

\newtheorem{prop}[thm]{Proposition}

\begin{document}

\title{The Two-Tap Input-Erasure Gaussian Channel and its Application to Cellular Communications}

\author{\authorblockN{O. Somekh\authorrefmark{1}, O. Simeone\authorrefmark{2}, H. V. Poor\authorrefmark{1}, and S. Shamai (Shitz)\authorrefmark{3}}\\
\authorblockA{\authorrefmark{1}
Department of Electrical Engineering, Princeton University, Princeton, NJ 08544, USA, {\{orens, poor\}@princeton.edu}}
\authorblockA{\authorrefmark{2}
CWCSPR, New Jersey Institute of Technology, Newark, NJ 07102-1982, USA, {osvaldo.simeone@njit.edu.}}
\authorblockA{\authorrefmark{3}
Department of Electrical Engineering,Technion, Haifa, 32000, Israel, {sshlomo@ee.technion.ac.il}}
}

\maketitle

\begin{abstract}
This paper considers the input-erasure Gaussian channel. In contrast to the output-erasure channel where erasures are applied to the output of a linear time-invariant (LTI) system, here erasures, known to the receiver, are applied to the inputs of the LTI system. Focusing on the case where the input symbols are independent and identically distributed (i.i.d)., it is shown that the two channels (input- and output-erasure) are equivalent. Furthermore, assuming that the LTI system consists of a two-tap finite impulse response (FIR) filter, and using simple properties of tri-diagonal matrices, an achievable rate expression is presented in the form of an infinite sum. The results are then used to study the benefits of joint multicell processing (MCP) over single-cell processing (SCP) in a simple linear cellular uplink, where each mobile terminal is received by only the two nearby base-stations (BSs). Specifically, the analysis accounts for ergodic shadowing that simultaneously blocks the mobile terminal (MT) signal from being received by the two BS. It is shown that the resulting ergodic per-cell capacity with optimal MCP is equivalent to that of the two-tap input-erasure channel. Finally, the same cellular uplink is addressed by accounting for dynamic user activity, which is modelled by assuming that each MT is randomly selected to be active or to remain silent throughout the whole transmission block. For this alternative model, a similar equivalence results to the input-erasure channel are reported. 

\end{abstract}

\section{Introduction}

In recent years various variants of erasure networks have been the focal point of many works in the fields of information theory, coding theory, communication, and digital signal processing (DSP). It turns out that the idealization of situations where symbols are detectable with very high or very low probability is useful in providing accurate and tractable assessment of practical interest. Applications of erasure networks, in general, and erasure channels, in particular, include packet-switched networks, wireless fading channels, impulse noise channels, and sensor networks with faulty transducers. Another interesting DSP application is decimation using frequency shaped random erasures \cite{Dey-Oppenheim-ICASSP07} (see also \cite{Dey-Russell-Oppenheim-SP06} for a somewhat dual problem).

\begin{figure}[tb]
\begin{center}
\includegraphics[scale = 0.3, angle=90]{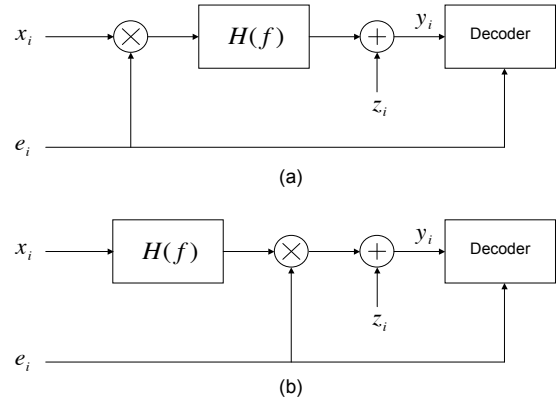}
\vspace{-0.5cm}
\caption{The linear input- and output-erasure Gaussian channels.} \label{fig: IEGC OEGC schemes}
\end{center}
\end{figure}

The simplest non-trivial setting incorporating erasures is perhaps a memoryless discrete channel observed through a memoryless erasure channel. In this case, it is straightforward to see that the capacity of the concatenated channel is the capacity of the discrete channel times the erasure rate \cite{Julian-ISIT02}. Moreover, even if the erasures have memory, the latter result holds \cite{Julian-ISIT02} \cite{Verdu-Weissman-IT08}. However, adding memory to the discrete channel changes the picture considerably.

In \cite{Tulino-Verdu-Caire-Shamai_ISIT07}, Tulino et al. have calculated the capacity of an input power-constrained Gaussian channel concatenated with a memoryless erasure channel. The problem requires obtaining the asymptotic spectral distribution of a sub-matrix of a nonnegative definite Toeplitz matrix obtained by randomly deleting rows and columns. In addition, \cite{Tulino-Verdu-Caire-Shamai_ISIT07} presents an explicit expression for the optimal input power spectral density (PSD) function. Asymptotic capacity expressions in various regimes such as extreme-SNR, sporadic erasures events, and sporadic non-erasure events, are reported as well. In addition, several useful bounds (lower and upper) in closed form are presented. Recently, the results of \cite{Tulino-Verdu-Caire-Shamai_ISIT07} were extended in \cite{Tulino-Verdu-Caire-Shamai-ISIT08} to include any flat fading distribution, not necessarily the on-off process obtained by erasures.

In this paper, we consider the input-erasure Gaussian channel. According to this setup, erasures (known at the receiver only) are applied directly to the Gaussian input of a linear time-invariante (LTI) system (e.g. a finite input response filter, FIR), whose output samples are affected by additive Gaussian noise. To simplify the analysis, we consider independent and identically distributed (i.i.d.) input signaling, and show that in this case the input-erasure channel and the output-erasure channel of \cite{Tulino-Verdu-Caire-Shamai_ISIT07} are equivalent. Focusing on the case where the channel filter includes two taps, we present an alternative analysis to \cite{Tulino-Verdu-Caire-Shamai_ISIT07} that obtains the achievable rate without relying on random matrix theory tools. The analysis is based on a simple observation that in the two-tap case, each erasure splits the output vector into two independent sub-vectors. In addition, it uses a useful property of tri-diagonal matrices, which states that their determinant can be expressed as a weighted sum of the determinants of their two main principal sub-matrices. It is also shown that the analysis can be extended to the case in which the erasures are not memoryless.

The results are then used to find the achievable rate supported by a simple cellular uplink model (where inter-cell interference is limited only to adjacent cells), in which the transmission of each mobile terminal (MT) undergoes erasures due to\textit{ ergodic shadowing}, which simultaneously blocks it from being received by all the base-stations (BSs). We show that with optimal joint decoding of the received signals from all the BSs (that is, multicell processing, MCP), the ergodic per-cell capacity of this cellular uplink equals the rate of the two-tap input-erasure channel considered earlier. The performance of MCP is compared to that of single-cell processing (SCP), demonstrating that the benefits of MCP decrease with the erasure rate (the reader is referred to \cite{Somekh-Simeone-Barness-Haimovich-Shamai-BookChapt-07} and \cite{Shamai-Somekh-Zaidel-JWCC-2004} for recent surveys on MCP). We then consider a similar cellular uplink setup but with a \textit{dynamic user activity} model, in which each user is randomly active or silent for the whole transmission block. We show that with MCP the system throughput per active user is equal to the rate of the two-tap input-erasure channel divided by the probability of an arbitrary user being active. Similar benefits of MCP over SCP, as those reported for the shadowing scenario, are observed also for the user activity model. 

It is finally noted that a related analysis has been recently reported in \cite{Levy-Zeitouni-Shamai-2Tap_UP08}, where the \textit{outage} capacity supported by a similar cellular uplink and user activity model was derived with both MCP and SCP.

\section{Channel Model}

In this work we consider a channel (see Fig. \ref{fig: IEGC OEGC schemes}.a), where erasures $\{e_{n}\}$, \emph{known to the receiver only}, are applied to the inputs symbols $\{x_{n}\}$ of a discrete time LTI system with transfer function $H(f)$. For simplicity, we assume that the LTI system is causal and has a finite\footnote{It is easily verified that the results extend to any LTI system with finite gain $\int_{0}^{1}\left\vert H(f)\right\vert ^{2}df<\infty$.} impulse response of length $(L+1)$ symbols. Accordingly, the received signal at time index $n$, is given by
\begin{equation}
\label{ }
y_n=\sum_{m=0}^{L}h_{m}x_{n-m}e_{n-m}+z_n\ ,
\end{equation}
where $\{h_m\}_{m=0}^L$ are the impulse response coefficients such that
\begin{equation}
\label{ }
H(f)=\sum_{m=0}^L e^{-j2\pi m f}h_m\ ,
\end{equation}
and $z_n$ is the additive complex Gaussian noise $z_n\sim\mathcal{CN}(0,1)$. The erasures 
$e_n$, are assumed to be i.i.d. Bernoulli distributed random variables (r.v.'s)  $e_n\sim\mathcal{B}(q)$ (i.e., $e_n\in\{0,1\},\ Pr(e_n=0)=q,\ Pr(e_n=1)=1-q$). 

We consider transmissions in blocks of $N$ symbols. The received vector of an arbitrary block is given by 
\begin{equation}
\label{eq: IEGC output vector}
Y_{N+L} =  \Mat{H}_{N+L}\Mat{E}_N X_{N}+Z_{N+L}\ ,
\end{equation}
where $Y_{N+L} \triangleq(y_1,\ldots,y_{N+L})^\dagger$ is the $(N+L)\times 1$ output vector, $X_N\triangleq(x_1,\ldots,x_N)^\dagger$ is the complex Gaussian $N\times 1$ input vector $X_N\sim\mathcal{CN}(\Mat{0},\Mat{Q}_N)$ with average power constraint $\tr{(\Mat{Q}_N)}\le NP$,  $\Mat{E}_N\triangleq\diag(e_1,\ldots,e_N)$ is the $N\times N$ diagonal erasure matrix, and $Z_{N+L} \triangleq(z_1,\ldots,z_{N+1})^\dagger$ is the $(N+L)\times 1$ noise vector $Z_{N+L}\sim\mathcal{CN}(\Mat{0},\Mat{I}_{N+L})$. In addition, $\Mat{H}_{N+L}$ is the $(N+L)\times N$ $L$-diagonal \emph{Toeplitz} channel transfer matrix with $[\Mat{H}_{N+L}]_{n,m}=h_{n-m}$ (where out-of-range indices should be ignored). 

Recalling that the receiver is aware of the erasures, it is easy to prove that a Gaussian input distribution is a capacity-achieving distribution and that the capacity is given by
\begin{multline}
\label{eq:  IEGC capacity implicit}
C=\lim_{N\rightarrow\infty}\max_{\Mat{Q}_N,\ \tr{(\Mat{Q}_N)}\le NP} \\ \E\left[\frac{1}{N}\log \det{\left(\Mat{I}_{N+L}+
\Mat{H}_{N+L}\Mat{E}_N\Mat{Q_{N}}\Mat{E}^\dagger_N\Mat{H}^\dagger_{N+L}\right)}\right]\ ,
\end{multline} 
where the expectation is taken over the diagonal entries of the erasure matrix $\Mat{E}_N$, and $\Mat{I}_{N+L}$ is an $(N+L)\times (N+L)$ identity matrix.

Unfortunately, finding the optimal covariance matrix and calculating \eqref{eq:  IEGC capacity implicit} is still an open problem. Instead, we focus in the rest of this work on assessing achievable rates resulting by using i.i.d. inputs, hence, by setting $\Mat{Q}_N=P\Mat{I}_N$. 

\section{Rate Analysis}\label{sec: IEGC rate analysis} 

\subsection{Arbitrary FIR Filter}

Here we consider an arbitrary $L$-length FIR filter and rewrite the logdet term of \eqref{eq:  IEGC capacity implicit} as:
\begin{multline}
\label{eq: IEGC-OEGC equivalence}
\log \det{\left(\Mat{I}_{N+L}+
P\Mat{H}_{N+L}\Mat{E}_N\Mat{E}^\dagger_N\Mat{H}^\dagger_{N+L}\right)}=\\
\log \det{\left(\Mat{I}_{N}+
P\Mat{E}^\dagger_N\Mat{H}^\dagger_{N+L}\Mat{H}_{N+L}\Mat{E}_N\right)}\ .
\end{multline}
By noting that the left-hand-side (LHS) of the last equation is the same as the one resulting from the analysis of the output-erasure channel with i.i.d. inputs, considered in \cite{Tulino-Verdu-Caire-Shamai_ISIT07} (see Fig. \ref{fig: IEGC OEGC schemes}.b), we can state the following.
\begin{prop}
The Gaussian input-erasure and output-erasure channels with i.i.d. inputs are equivalent.
\end{prop}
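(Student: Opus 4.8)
The plan is to reduce the statement to the elementary push-through determinant identity $\det(\Mat{I}+\Mat{A}\Mat{B})=\det(\Mat{I}+\Mat{B}\Mat{A})$: once the achievable rate of the input-erasure channel with i.i.d.\ inputs is written as an erasure-averaged, normalized log-determinant, this identity turns it into exactly the quantity that \cite{Tulino-Verdu-Caire-Shamai_ISIT07} analyzes for the output-erasure channel, so that the two per-block mutual informations --- hence the two achievable rates --- coincide.

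Concretely, I would first set $\Mat{Q}_N=P\Mat{I}_N$ in \eqref{eq:  IEGC capacity implicit}; since a Gaussian input is optimal and the erasure pattern is known at the receiver, the input-erasure achievable rate is then $\lim_{N\to\infty}\frac1N\,\E\!\left[\log\det\left(\Mat{I}_{N+L}+P\,\Mat{H}_{N+L}\Mat{E}_N\Mat{E}^\dagger_N\Mat{H}^\dagger_{N+L}\right)\right]$, i.e.\ the normalized expectation of the left-hand side of \eqref{eq: IEGC-OEGC equivalence}. Applying the identity with $\Mat{A}=\sqrt{P}\,\Mat{H}_{N+L}\Mat{E}_N$ and $\Mat{B}=\sqrt{P}\,\Mat{E}^\dagger_N\Mat{H}^\dagger_{N+L}$ is exactly the passage to the right-hand side of \eqref{eq: IEGC-OEGC equivalence}, rewriting the rate as the limiting normalized log-determinant of $\Mat{I}_N+P\,\Mat{E}^\dagger_N\Mat{H}^\dagger_{N+L}\Mat{H}_{N+L}\Mat{E}_N$. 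Because the outer factors $\Mat{E}_N$ only zero the rows and columns indexed by the erased symbols, this is the log-determinant of $\Mat{I}+P$ times a random principal submatrix --- with i.i.d.\ Bernoulli$(1-q)$ retained indices --- of the nonnegative-definite banded Toeplitz matrix $\Mat{H}^\dagger_{N+L}\Mat{H}_{N+L}$, which is generated by the symbol $\abs{H(f)}^2$.

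The remaining step is to recognize this as the output-erasure rate of \cite{Tulino-Verdu-Caire-Shamai_ISIT07}, in which the i.i.d.-input rate is the limiting normalized log-determinant of $\Mat{I}+P$ times a random principal submatrix --- under the same i.i.d.\ Bernoulli$(1-q)$ thinning --- of the Toeplitz matrix $\Mat{H}_{N+L}\Mat{H}^\dagger_{N+L}$, which is generated by the same symbol $\abs{H(f)}^2$. Having matched the two expressions through \eqref{eq: IEGC-OEGC equivalence}, the proposition follows.

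I expect the bookkeeping in this last identification to be the only real obstacle: one must check that passing from the $N$-dimensional, input-side quadratic form $\Mat{E}^\dagger_N\Mat{H}^\dagger_{N+L}\Mat{H}_{N+L}\Mat{E}_N$ to the output-side quadratic form of \cite{Tulino-Verdu-Caire-Shamai_ISIT07} --- with the attendant $O(L)$ boundary rows and columns and the slightly different indexing of the Bernoulli erasures at the block edges --- leaves the per-symbol limit unchanged. It does: these discrepancies are $o(N)$ inside the log-determinant, and they vanish identically in a cyclic-convolution model. Invoking the exact identity \eqref{eq: IEGC-OEGC equivalence} together with the precise output-erasure model of \cite{Tulino-Verdu-Caire-Shamai_ISIT07} makes this step routine.
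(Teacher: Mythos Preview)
Your approach is essentially the same as the paper's: the paper's entire argument for this proposition is to display \eqref{eq: IEGC-OEGC equivalence} --- which is exactly the push-through identity $\det(\Mat{I}+\Mat{A}\Mat{B})=\det(\Mat{I}+\Mat{B}\Mat{A})$ you invoke --- and then observe that one side coincides with the output-erasure expression analyzed in \cite{Tulino-Verdu-Caire-Shamai_ISIT07}. You are simply more explicit than the paper about the identity being used and about the $o(N)$ edge bookkeeping, which the paper suppresses entirely.
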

This result implies that with i.i.d. inputs, both channels achieve the same rate. Hence, all the results reported in \cite{Tulino-Verdu-Caire-Shamai_ISIT07} hold verbatim for the input-erasure channel with i.i.d. inputs, by setting the PSD accordingly (i.e., $S(f)=P\abs{H(f)}^2$) in the results of  \cite{Tulino-Verdu-Caire-Shamai_ISIT07}.


The analysis in \cite{Tulino-Verdu-Caire-Shamai_ISIT07} uses random matrix theory tools to derive the main result which is the rate of the output-erasure channel for any channel FIR filter and any input PSD. The rate result of  \cite{Tulino-Verdu-Caire-Shamai_ISIT07} involves a fixed-point integral equation and in general it is not explicitly formulated. Next, we present an alternative analysis for the special case of i.i.d. inputs and a two-tap channel filter. The resulting rate expression is considerably different than that of  \cite{Tulino-Verdu-Caire-Shamai_ISIT07}, and may shed some further light on the problem.

\subsection{Two-Tap Filter ($L=1$)}

In this section we consider a two-tap channel filter (i.e., its impulse-response includes only two non-zero coefficient $h_0$, and $h_1$). Hence, the channel transfer matrix $\Mat{H}_{N+1}$ reduces to a bi-diagonal Toeplitz matrix. For this special case we have the following.
\begin{prop}\label{prop: two-tap IEGC rate}
The achievable rate of the two-tap input-erasure Gaussian channel with i.i.d. inputs is given by
\begin{equation}
\label{eq: two tap IEGC rate}
R_{\mathrm{2tap}} = q^2\sum_{n=1}^\infty (1-q)^n \log\left(\frac{r^{n+1}-s^{n+1}}{r-s}\right)\ ,
\end{equation}
where
\begin{equation}
\label{eq: a b definitions}
a \triangleq 1+P (\abs{h_0}^2+\abs{h_1}^2)\quad\mathrm{and} \quad b \triangleq P\abs{h_0}\abs{h_1}\ ,
\end{equation}
and
\begin{equation}
\label{eq: r s definitions}
r\triangleq\frac{1}{2}\left(a+\sqrt{a^2-4b^2}\right)\ \ \mathrm{and}\ \ s\triangleq\frac{1}{2}\left(a-\sqrt{a^2-4b^2}\right)\ .
\end{equation}
\end{prop}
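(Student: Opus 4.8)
The plan is to evaluate the i.i.d.-input achievable rate, i.e.\ the quantity in \eqref{eq:  IEGC capacity implicit} specialised to $\Mat{Q}_N=P\Mat{I}_N$, working from the right-hand side of \eqref{eq: IEGC-OEGC equivalence}. Set $\Mat{M}_N\triangleq\Mat{I}_N+P\Mat{E}_N\Mat{H}_{N+1}^\dagger\Mat{H}_{N+1}\Mat{E}_N$, so that $R_{\mathrm{2tap}}=\lim_{N\to\infty}\frac1N\E[\log\det\Mat{M}_N]$. First I would observe that for $L=1$ the Gram matrix $\Mat{H}_{N+1}^\dagger\Mat{H}_{N+1}$ is tri-diagonal Toeplitz: a short index check shows that every diagonal entry equals $\abs{h_0}^2+\abs{h_1}^2$ (boundary rows included) and the off-diagonal entries are $\overline{h_1}h_0$ and $\overline{h_0}h_1$. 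Conjugating by the diagonal matrix $\Mat{E}_N$ zeroes out the row and column of each erased index, and since off-diagonal entries only couple adjacent indices, after the obvious permutation $\Mat{M}_N$ becomes block diagonal: every erased index contributes a $1\times1$ block equal to $1$, and every maximal run of $n$ consecutive non-erased indices contributes the $n\times n$ tri-diagonal Toeplitz block $\Mat{I}_n+P\Mat{T}_n$, where $\Mat{T}_n$ has diagonal $\abs{h_0}^2+\abs{h_1}^2$ and off-diagonals $\overline{h_1}h_0,\overline{h_0}h_1$. This is the precise content of the ``each erasure splits the output vector into independent sub-vectors'' remark.

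The second step is to compute $D_n\triangleq\det(\Mat{I}_n+P\Mat{T}_n)$. Expanding along the last row (the tri-diagonal determinant identity mentioned in the Introduction) gives the three-term recursion $D_n=aD_{n-1}-b^2D_{n-2}$, $D_0=1$, $D_1=a$, with $a,b$ as in \eqref{eq: a b definitions}; here the product of the two off-diagonals of $\Mat{I}_n+P\Mat{T}_n$ is $P^2\abs{h_0}^2\abs{h_1}^2=b^2$. The characteristic roots of this recurrence are exactly $r$ and $s$ of \eqref{eq: r s definitions}, and solving with the initial conditions yields $D_n=(r^{n+1}-s^{n+1})/(r-s)$ --- the argument of the logarithm in \eqref{eq: two tap IEGC rate}. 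Along the way I would record that $a^2-4b^2\ge0$ (AM--GM applied to $\abs{h_0}^2,\abs{h_1}^2$), so $0\le s\le r\le a$, and, since $\Mat{T}_n\succeq0$, that $1\le D_n\le(n+1)a^n$; hence $0\le\log D_n=O(n)$.

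The third step assembles these facts. Writing $\log\det\Mat{M}_N=\sum_{n\ge1}K_n^{(N)}\log D_n$, where $K_n^{(N)}$ is the number of maximal non-erased runs of length exactly $n$ in the block of $N$ symbols, and using that the $e_m$ are i.i.d.\ $\mathcal{B}(q)$: a run of length $n$ starts at an interior position $m$ exactly when $e_{m-1}=0$, $e_m=\cdots=e_{m+n-1}=1$, $e_{m+n}=0$, an event of probability $q^2(1-q)^n$; summing over admissible start positions gives $\E[K_n^{(N)}]=(N-n-1)^+\,q^2(1-q)^n+O((1-q)^n)$, where the $O$-term collects the at most two runs touching the block boundary. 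Dividing by $N$, using $\log D_n=O(n)$ so that $\sum_n(1-q)^n\log D_n<\infty$, and invoking dominated convergence for the series, the limit is taken term by term to obtain $R_{\mathrm{2tap}}=q^2\sum_{n\ge1}(1-q)^n\log D_n$, which is \eqref{eq: two tap IEGC rate}.

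The main obstacle, and essentially the only place requiring care, is this last interchange of the $N\to\infty$ limit, the expectation, and the infinite sum over run lengths: one must control both the boundary runs (which can in principle be long) and the tail of the series, and that is exactly why it is worth establishing the explicit estimate $r^n\le D_n\le(n+1)a^n$ before passing to the limit. Everything else --- the block-diagonalisation of $\Mat{M}_N$ and the solution of the tri-diagonal recurrence --- is routine.
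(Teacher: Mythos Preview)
Your proposal is correct and follows essentially the same route as the paper's proof: the same reduction via \eqref{eq: IEGC-OEGC equivalence} to the tri-diagonal matrix $\Mat{I}_N+P\Mat{G}_N$, the same block-diagonal decomposition induced by erased indices, the same three-term recurrence $D_n=aD_{n-1}-b^2D_{n-2}$ solved to give \eqref{eq: diff solution}, and the same run-length counting to obtain the weight $q^2(1-q)^n$. Your treatment of the $N\to\infty$ limit is slightly more explicit (recording $0\le\log D_n=O(n)$ and invoking dominated convergence, and tracking the boundary runs as an $O((1-q)^n)$ correction), whereas the paper bounds $R-R_N$ directly by two geometric-type tails; these are equivalent in substance.
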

\begin{proof}
See Appendix 
\end{proof}
The proof relies on a simple observation that in the two-tap case, each erasure splits the output vector into two independent sub-vectors. In addition, it uses a useful property of tri-diagonal matrices, which states that their determinant can be expressed as a weighted sum of their first two main principal sub-matrices. It is noted that this analysis does not involve classical random matrix theory. Moreover, since it relies heavily on the two-tap assumption, it cannot be applied to input-erasure channels with filter lengths larger than two.

Examining the rate expression \eqref{eq: two tap IEGC rate}, a few comments are in place. Although the rate expression is an infinite sum, any finite sum, which is evidently a lower bound, can be calculated numerically in a straightforward manner. In addition, convergence is assured since the expression is a power series and $(1-q)\leq 1$. It is also observed that the rate is independent of the filter coefficients' phases, and hence, random phases (known to the receiver) can be added to the channel model without changing the end result. By applying Hadamard's inequality directly to the covariance matrix $(\Mat{I}_N+\Mat{G}_N)$ (see \eqref{eq: Cov matrix def}), it can be shown that for a fixed filter gain, filter memory reduces the rate. Hence, the rate is maximized for a one-tap filter with equal gain. On this note, 
it is easily verified that for the special case of $h_{1}=0$ (memoryless Gaussian channel with erasures), $s=0$ and \eqref{eq: two tap IEGC rate} reduces to
\begin{equation}
\label{ }
R_{\mathrm{1tap}} = (1-q) \log(1+P\abs{h_0}^2)\ ,
\end{equation}
which was already reported in \cite{Julian-ISIT03}. Another special case, which is also an upper bound, is the case where $q=0$. The capacity of this two-tap channel with i.i.d. inputs is given by \cite{Somekh-Zaidel-Shamai-IT-2007-CWIT-2005}
\begin{equation}
\label{eq: 2tap ub}
R_{\mathrm{2tap-ub}}=\log r\  ,
\end{equation}
where $r$ is defined in \eqref{eq: r s definitions}.

Turning to asymptotic analyses, we refer the reader to \cite{Tulino-Verdu-Caire-Shamai_ISIT07}, where compact closed-form expressions are reported for the rate low-SNR characterization, the rate with sporadic erasures, and the rate with sporadic non-erasures. These expressions can be used verbatim for the two-tap input-erasure channel of interest by setting i.i.d. inputs in \cite{Tulino-Verdu-Caire-Shamai_ISIT07}. In addition, \cite{Tulino-Verdu-Caire-Shamai_ISIT07} provides also the high-SNR characterization of the output-erasure channel rate, albeit the calculation of the high-SNR power offset involves a fixed point-equation. 
Applying definitions of \cite{Lozano-Tulino-Verdu-high-SNR-IT05} to the rate expression \eqref{eq: two tap IEGC rate}, we provide an alternative expression for the high-SNR power offset as well.
\begin{prop}
The two-tap input-erasure Gaussian channel rate with i.i.d. inputs is characterized in the high-SNR regime by
\begin{equation}
\begin{aligned}
\Sinf&=(1-q)\quad\mathrm{and}\\
\Linf &= -q^2\sum_{n=1}^\infty (1-q)^{n-1}\log\left(\frac{ \abs{h_0}^{2(n+1)}-\abs{h_1}^{2(n+1)}} { \abs{h_0}^2- \abs{h_1}^2}\right).
\end{aligned}
\end{equation}
\end{prop}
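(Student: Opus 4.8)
The plan is to read off the large-$P$ behavior of the closed form \eqref{eq: two tap IEGC rate} and match it against the high-SNR parametrization of \cite{Lozano-Tulino-Verdu-high-SNR-IT05}, i.e. $R_{\mathrm{2tap}}\simeq\Sinf(\log P-\Linf)$, where (with the transmit power $P$ playing the role of the SNR) $\Sinf=\lim_{P\to\infty}R_{\mathrm{2tap}}/\log P$ and $\Linf=\lim_{P\to\infty}\big(\log P-R_{\mathrm{2tap}}/\Sinf\big)$. The starting observation is that, since $r$ and $s$ are the roots of $t^2-at+b^2=0$, one has \emph{exactly} $r+s=a=1+P(\abs{h_0}^2+\abs{h_1}^2)$ and $rs=b^2=P^2\abs{h_0}^2\abs{h_1}^2$. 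Hence, setting $\tilde r\triangleq r/P$ and $\tilde s\triangleq s/P$, we get $\tilde r\tilde s=\abs{h_0}^2\abs{h_1}^2$ for every $P$ and $\tilde r+\tilde s=\abs{h_0}^2+\abs{h_1}^2+1/P$, so that $(\tilde r,\tilde s)\to(M,m)$ as $P\to\infty$, with $M\triangleq\max(\abs{h_0}^2,\abs{h_1}^2)$ and $m\triangleq\min(\abs{h_0}^2,\abs{h_1}^2)$. Using $\frac{r^{n+1}-s^{n+1}}{r-s}=P^n\sum_{k=0}^n\tilde r^k\tilde s^{n-k}$, the generic logarithm in \eqref{eq: two tap IEGC rate} splits as $n\log P+\log\!\big(\sum_{k=0}^n\tilde r^k\,\tilde s^{n-k}\big)$, whence
\[
R_{\mathrm{2tap}}=\Big(q^2\sum_{n=1}^\infty n(1-q)^n\Big)\log P+q^2\sum_{n=1}^\infty(1-q)^n\log\!\Big(\sum_{k=0}^n\tilde r^k\,\tilde s^{n-k}\Big).
\]

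\noindent Next I would evaluate the two pieces. The first coefficient is the elementary series $q^2\sum_{n\ge1}n(1-q)^n=q^2\cdot\frac{1-q}{q^2}=1-q$, which gives $\Sinf=1-q$ once the second series is shown to have a finite limit. For the second series I would pass to the limit term by term: $\sum_{k=0}^n\tilde r^k\,\tilde s^{n-k}\to\sum_{k=0}^nM^k m^{n-k}=\frac{M^{n+1}-m^{n+1}}{M-m}=\frac{\abs{h_0}^{2(n+1)}-\abs{h_1}^{2(n+1)}}{\abs{h_0}^2-\abs{h_1}^2}$, the last expression being symmetric in the two taps (so it is immaterial which is larger, and the degenerate case $\abs{h_0}=\abs{h_1}$ is covered by continuity, both sides collapsing to $(n+1)\abs{h_0}^{2n}$). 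To legitimize exchanging the limit with the infinite sum I would invoke dominated convergence for series: for $P\ge1$ one has $0\le\tilde s\le\tilde r\le\abs{h_0}^2+\abs{h_1}^2+1$ and also $\tilde r\ge\tfrac12(\abs{h_0}^2+\abs{h_1}^2)>0$ (the larger root), so $\tilde r$ stays in a fixed compact subinterval of $(0,\infty)$; then $\tilde r^n\le\sum_{k=0}^n\tilde r^k\,\tilde s^{n-k}\le(n+1)\tilde r^n$ shows $\big|\log\sum_{k=0}^n\tilde r^k\,\tilde s^{n-k}\big|$ is bounded by an affine function of $n$ plus a $\log(n+1)$ term, uniformly in $P$, which is summable against $(1-q)^n$ since $(1-q)<1$.

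\noindent Combining the two pieces, $R_{\mathrm{2tap}}=(1-q)\log P+q^2\sum_{n=1}^\infty(1-q)^n\log\!\big(\frac{\abs{h_0}^{2(n+1)}-\abs{h_1}^{2(n+1)}}{\abs{h_0}^2-\abs{h_1}^2}\big)+o(1)$ as $P\to\infty$, so $\Sinf=1-q$ and
\[
\Linf=-\frac{1}{1-q}\,q^2\sum_{n=1}^\infty(1-q)^n\log\!\Big(\frac{\abs{h_0}^{2(n+1)}-\abs{h_1}^{2(n+1)}}{\abs{h_0}^2-\abs{h_1}^2}\Big)=-q^2\sum_{n=1}^\infty(1-q)^{n-1}\log\!\Big(\frac{\abs{h_0}^{2(n+1)}-\abs{h_1}^{2(n+1)}}{\abs{h_0}^2-\abs{h_1}^2}\Big),
\]
which is the claimed characterization. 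The only genuinely delicate point is the interchange of the $P\to\infty$ limit with the summation over $n$; everything else is elementary algebra on the roots $r,s$ and on the series $\sum_n n(1-q)^n$. (I would also note that the boundary cases $q\to0$ and $q\to1$ are consistent with \eqref{eq: 2tap ub} and with a zero-rate channel, respectively.)
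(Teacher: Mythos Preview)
Your argument is correct and follows exactly the route the paper indicates, namely applying the high-SNR definitions of \cite{Lozano-Tulino-Verdu-high-SNR-IT05} directly to the closed-form rate \eqref{eq: two tap IEGC rate}; the paper states the proposition without a detailed proof, so your explicit computation (rescaling $r,s$ by $P$, identifying the limits $\{\tilde r,\tilde s\}\to\{\abs{h_0}^2,\abs{h_1}^2\}$ via Vieta, and the dominated-convergence justification for swapping the limit and the sum) simply fills in the omitted details.
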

And the deleterious effects of increasing erasure rate $q$ are clearly visible.

\subsubsection{Erasures with Memory}
Following the proof of Prop. \ref{prop: two-tap IEGC rate}, it can be verified that the presented analysis method can handle also erasures with memory (stationary and ergodic). The only stage that should be altered is the calculation of  \eqref{eq: consec ones prob}, that is, the probability of having an isolated sequence $\ell$ of $n$ consecutive non-erasures should be recalculated for the new erasure source. For example, assuming the erasures follow a first order Markov chain with transition probabilities $Pr(0\rightarrow 1)=1-q_0$, and $Pr(1\rightarrow 0)=q_1$, it is easily verified that
\begin{equation}
\label{ }
Pr(\ell = n)=\frac{q_1^2(1-q_0)}{1-q_0+q_1}(1-q_1)^{n-1}\ ,
\end{equation}
and the corresponding rate of the two-tap input erasure channel with i.i.d. inputs is given by
\begin{equation}
\label{eq: two tap IEGC markov erasure rate}
R_{\mathrm{2tap}} = \frac{q_1^2(1-q_0)}{1-q_0+q_1}\sum_{n=1}^\infty (1-q_1)^{n-1} \log\left(\frac{r^{n+1}-s^{n+1}}{r-s}\right)\ .
\end{equation}
As expected, for $q_0=q_1=q$ the erasure source is i.i.d. and \eqref{eq: two tap IEGC markov erasure rate} reduces to \eqref{eq: two tap IEGC rate}.
It is interesting to note that for the special case $h_1=0$, which corresponds to a memoryless channel followed by an erasure channel with memory, the rate reduces to 
\begin{equation}
\label{ }
R_{\mathrm{1tap}} = (1-q) \log(1+P\abs{h_0}^2)\  ,
\end{equation}
where
\begin{equation}
\label{ }
q = \frac{q_1}{1-q_0+q_1}\ ,
\end{equation}
is the steady-state erasure rate. This results is in agreement with \cite{Verdu-Weissman-IT08}, where it has been shown that the capacity of a concatenation of a discrete memoryless channel with capacity $C$, and an erasure channel (possibly with memory) with erasure rate $q$, is equal to $(1-q)C$. 

\section{Numerical Results}\label{sec: numerical results}

\begin{figure}
\begin{center}
\includegraphics[scale = \myscale]{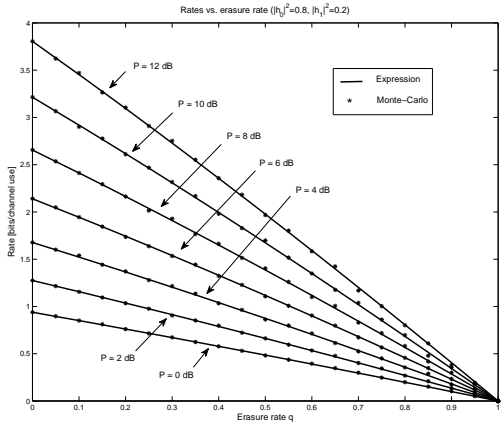}
\vspace{0cm}
\caption{Rates vs. erasure rate $q$, for $P=0,\ 2,\ 4,\ 6,\ 8,\ 10,\ 12$, and $\abs{h_0}^2=0.8,\ \abs{h_1}^2=0.2$.} \label{fig: IEGC rate vs q 1}
\end{center}
\end{figure}

\begin{figure}
\begin{center}
\includegraphics[scale = \myscale]{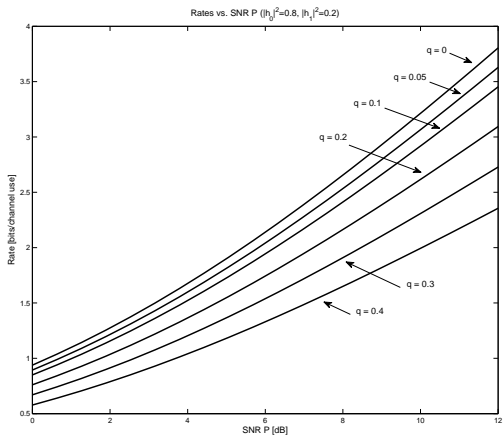}
\vspace{0cm}
\caption{Rates vs. SNR $P$, for $q=0,\ 0.05,\ 0.1,\ 0.2,\ 0.3,\ 0.4$, and $\abs{h_0}^2=0.8,\ \abs{h_1}^2=0.2$.} \label{fig: IEGC rate vs snr 1}
\end{center}
\end{figure}

\begin{figure}
\begin{center}
\includegraphics[scale = \myscale]{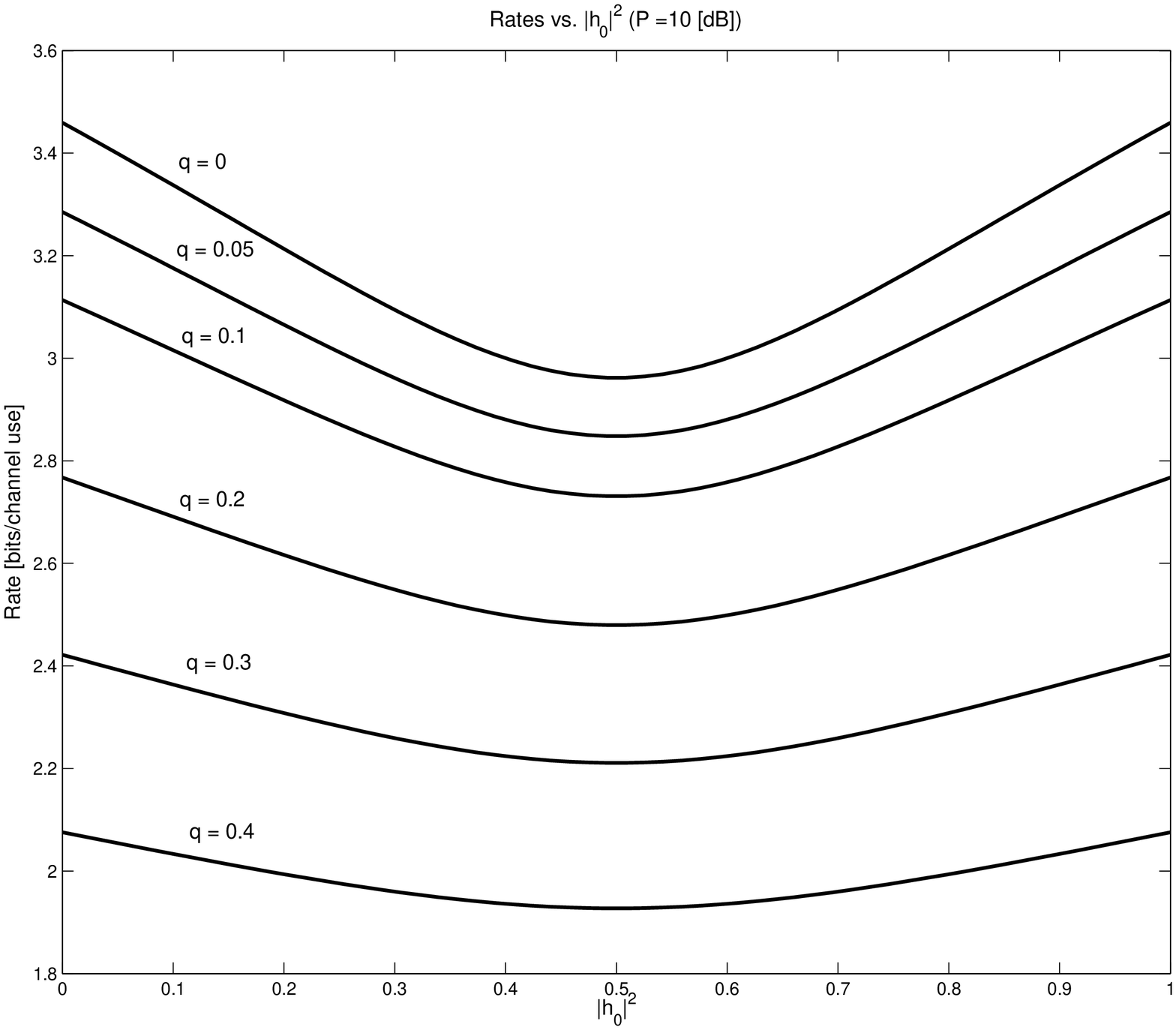}
\vspace{0cm}
\caption{Rates vs. filter coefficient $\abs{h_0}^2$, for $q=0,\ 0.05,\ 0.1,\ 0.2,\ 0.3,\ 0.4$.} \label{fig: IEGC rate vs h0 1}
\end{center}
\end{figure}

The rate (\ref{eq: two tap IEGC rate}) is plotted in Figure \ref{fig: IEGC rate vs q 1} as a function of the erasure rate $q$, for several SNR values, and filter coefficients $\left\vert h_{0}\right\vert
^{2}=0.8,\ \left\vert h_{1}\right\vert ^{2}=0.2$. Expression \eqref{eq: two tap IEGC rate} is calculated by taking 200 summands and is plotted with solid lines, while Monte-Carlo simulation results of 50 trials, each with block size of 200 symbols, are shown as well. Examining the figure, the deleterious effect of erasures is evident. In addition, the figure demonstrates a good match between the Monte-Carlo simulations and the \textquotedblleft exact\textquotedblright\ results. Similar conclusions can be obtained from Figure \ref{fig: IEGC rate vs snr 1}, which includes rate curves (calculated from (\ref{eq: two tap IEGC rate}) as above) as functions of the SNR $P$ for several values of the erasure rate $q$, and $\left\vert h_{0}\right\vert ^{2}=0.8,\ \left\vert h_{1}\right\vert ^{2}=0.2$. In addition, the erasure-free ($q=0$) rate curve \eqref{eq: 2tap ub} is included as a reference.

In Figure \ref{fig: IEGC rate vs h0 1} the rate curves are plotted as functions of the filter coefficients $\left\vert h_{0}\right\vert ^{2}$ for a unit gain filter (i.e., $\left\vert h_{0}\right\vert ^{2}+\left\vert h_{1}\right\vert ^{2}=1$), and $P=10$ [dB]. It is observed that the worst-case filtering corresponds to coefficients with equal amplitudes (i.e., $\left\vert h_{0}\right\vert ^{2}=\left\vert h_{1}\right\vert ^{2}$). On the other hand, as claimed earlier, best-case filtering is achieved for flat or memoryless channels (i.e., $\left\vert h_{1}\right\vert ^{2}=0$ or $\left\vert h_{1}\right\vert ^{2}=0$). Moreover, these observations are independent of the erasure rate $q$.

\section{Application to Cellular Communications}

Cellular systems are currently the main media to provide high-data rate services to mobile users. Therefore, a continued search for techniques that provide better service and coverage in cellular systems is under way. A promising cooperation-based technique is joint MCP, proposed by Wyner in \cite{Wyner-94}. Accordingly, clusters of BSs jointly process their signals to mitigate or eliminate the overall interference, since more signals are useful. Unlike conventional approaches that treats interference stemming from other cells as noise, or tries to avoids it, this approach (also referred to as \emph{distributed antenna array}) exploits the interference to provide better rates to the clusters' MTs.

Here we are interested in further exploring the benefits of MCP when ergodic erasures affect the MTs' transmissions in the uplink. Such erasures account for a scenario where an MT may undergo ergodic deep fading (or shadowing) such that its transmission cannot be received by \emph{any} of its cluster's antennas.

Motivated by the fact that inter-cell interference is essentially limited to only a small number of BSs, and with mathematical tractability in mind, we assume a synchronous Wyner-like cellular uplink setup. According to this setup (see Fig. \ref{fig: SHO model}), an infinite number of cells are arranged on a line, as along a highway. Moreover, the each MT ``sees'' its own BS antenna (with unit path gain), and the adjacent BS antenna (with path gain $\alpha\in\lbrack0,\ 1]$) only. This model, which was first introduced in \cite{Somekh-Zaidel-Shamai-IT-2007-CWIT-2005}, focuses on users lying on the cells' boundaries and is thus sometimes referred to as the \emph{soft-handoff} model. Specifically, assuming a single active user per-cell (intra-cell time-division multiple-access), the received signal at the $n$th BS antenna, for an arbitrary time index, is given by
\begin{equation}
\label{ }
y_n = e_n x_n+e_{n-1}\alpha x_{n-1}+z_n\ ,
\end{equation}
where $x_n$ and $x_{n-1}$ are the MTs' transmissions $x_n,\ x_{n-1}\sim \mathcal{CN}(0,P)$, $e_n$ and $e_{n-1}$ are the erasures $e_n,\ e_{n-1}\sim \mathcal{B}(q)$, and $z_n$ is the additive noise $z_n  \sim \mathcal{CN}(0,1)$. It is noted that the erasures are assumed to be i.i.d. among different users (along the cells' indices $n$), and ergodic along the time index for each user. Finally, it is noted that users are not allowed to cooperate.

\begin{figure}
\begin{center}
\includegraphics[angle = 90, scale = 0.305]{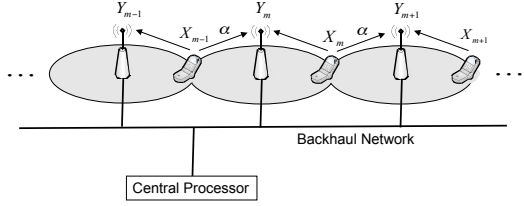}
\vspace{-1.5cm}
\caption{The soft-handoff cellular uplink model.} \label{fig: SHO model}
\end{center}
\end{figure}

\subsection{Single Cell Processing}

As a reference results we consider two conventional SCP approaches. 

\subsubsection{No Frequency Planning}

The simplest approach is for each cell to decode its own signals while treating the signals from the interfering cell as noise. Assuming that each cell is aware of its own user's erasures and those of the interfering cell's user (see \cite{Shamai-Wyner-97-I} for a similar setting), it is easily verified that the achievable rate is given by
\begin{equation}
\label{eq: spc rate raw}
R_{\mathrm{scp}} = \E\left[\log\left(1+\frac{e_{\mathrm{sig}}P}{1+\alpha^2e_{\mathrm{int}}P}\right)\right]\ ,
\end{equation}
where the expectation is taken over $e_{\mathrm{sig}}$ and $e_{\mathrm{int}}$ which represent arbitrary i.i.d. erasure r.v.'s of the useful signal and interference, respectively, $e_{\mathrm{sig}},\ e_{\mathrm{int}}\sim \mathcal{B}(q)$. By straightforward calculation we get that \eqref{eq: spc rate raw} is given by
\begin{equation}
\label{ }
R_{\mathrm{scp}}=(1-q)\log\left(\frac{(1+P)^q\left(1+(1+\alpha^2)P\right)^{1-q}}{(1+\alpha^2P)^{1-q}}\right)\ .
\end{equation}
The high-SNR characterization of $R_{\mathrm{scp}}$ is given by
\begin{equation}
\label{ }
\Sinf= q(1-q)\quad\mathrm{and}\quad \Linf = 0\ .
\end{equation}
It is interesting to observe that the rate above demonstrates a \textit{non-interference limited behavior}, although the receiver treats the other cell's signals as noise. This is easily explained by noting that with probability $q(1-q)$ the useful signal is received without interference.

\subsubsection{Inter-Cell Frequency Sharing}

Here we consider a more involved approach that requires static resource sharing between the cells. Since, according to the soft-handoff model, interference stems from one cell only, by dividing the available bandwidth of $W$ [Hz] into two orthogonal subbands with bandwidth of $W/2$ [Hz] each, and assigning them alternately to even and odd indexed cells, interference is totally avoided. Accordingly, it is easily verified that the achievable rate is given by\begin{equation}
\label{ }
R_{\mathrm{icfs}}=\frac{1}{2}\E[\log(1+eP)]=\frac{1-q}{2}\log(1+P)\ ,
\end{equation}
where $e$ is an arbitrary erasure r.v., $e\sim \mathcal{B}(q)$. 

The high-SNR characterization of $R_{\mathrm{icfs}}$ is given by
\begin{equation}
\label{ }
\Sinf= \frac{1-q}{2}\quad\mathrm{and}\quad \Linf = 0\ ,
\end{equation}
where the deleterious effects of both the erasures, and the bandwidth sharing are clearly observed in the multiplexing gain expression. Comparing the multiplexing gains of the SCP and inter-cell frequency sharing (ICFS) rates, reveals the superiority of the latter for erasure rates $q<1/2$ in the high-SNR regime.

\subsection{Multicell Processing}

With MCP, the BSs send their received signals to a central processor (CP) via an ideal backhaul network. The CP collects the received signals and jointly decodes the MTs' messages. Since the CP is aware of all the erasures, the overall channel is an ergodic multiple-access channel (MAC) for which the sum-rate capacity divided by the number of cells is given by
\begin{multline}
\label{eq: MCP rate}
C_{\mathrm{mcp}}=\lim_{N\rightarrow\infty}\\ \E\left[\frac{1}{N}\log \det{\left(\Mat{I}_{N+1}+
P\Mat{H}_{N+1}\Mat{E}_N\Mat{E}^\dagger_N\Mat{H}^\dagger_{N+1}\right)}\right]\ ,
\end{multline}
where $\Mat{E}_N\triangleq\diag(e_1,\ldots,e_N)$ is the $N\times N$ diagonal erasure matrix, and $\Mat{H}_{N+1}$ is an $(N+1)\times N$ bi-diagonal \emph{Toeplitz} channel transfer matrix with $[\Mat{H}_{N+1}]_{n,n}=1$ and $[\Mat{H}_{N+1}]_{n,n-1}=\alpha$ (where out-of-range indices should be ignored). It is noted that, by the symmetry of the model, the ergodic per-cell sum-capacity derived above is also the equal per-cell capacity. 

Comparing the rate expression of the two-tap input-erasure channel with i.i.d. inputs (see (\ref{eq:  IEGC capacity implicit})) to the per-cell sum-rate supported by the soft-handoff model, the following is evident.
\begin{cor}\label{corr: MCP corr}
The per-cell sum-rate supported by the soft-handoff model with optimal joint decoding and no MT cooperation equals the rate of the two-tap input-erasure Gaussian channel with i.i.d. inputs and $h_0=1,\ h_1=\alpha$.
\end{cor}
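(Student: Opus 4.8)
The plan is to show that Corollary~\ref{corr: MCP corr} is a direct identification of two mutual-information expressions, after which it follows immediately from Proposition~\ref{prop: two-tap IEGC rate}. First I would pin down the per-cell sum-capacity of the soft-handoff uplink. Because the CP is aware of all the erasures, a standard argument shows that Gaussian codebooks are sum-capacity achieving; because the MTs are not allowed to cooperate, their joint input covariance is forced to be diagonal; and because $\log\det(\Mat{I}+\Mat{A})$ is nondecreasing in the positive-semidefinite ordering of $\Mat{A}$, each MT should transmit at its full power $P$. Hence the optimizing input covariance is $P\Mat{I}_N$, and the per-cell sum-rate is exactly the normalized expected $\log\det$ displayed in \eqref{eq: MCP rate}. (By the symmetry of the model all users are statistically identical, so this per-cell sum-rate is also the equal per-cell rate, as already observed in the text.)

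Next I would match the two channel models term by term. The $(N+1)\times N$ matrix $\Mat{H}_{N+1}$ of the soft-handoff model is bi-diagonal Toeplitz with $[\Mat{H}_{N+1}]_{n,n}=1$ and $[\Mat{H}_{N+1}]_{n,n-1}=\alpha$; this is precisely the channel transfer matrix $\Mat{H}_{N+L}$ of the input-erasure model in \eqref{eq: IEGC output vector} specialized to $L=1$ with taps $h_0=1$ and $h_1=\alpha$. The erasure matrix $\Mat{E}_N=\diag(e_1,\dots,e_N)$ with $e_n\sim\mathcal{B}(q)$ i.i.d.\ plays the same role in both settings. Therefore the quantity inside the limit and expectation in \eqref{eq: MCP rate} is, symbol for symbol, the capacity expression \eqref{eq:  IEGC capacity implicit} evaluated at $\Mat{Q}_N=P\Mat{I}_N$ with $h_0=1$ and $h_1=\alpha$, i.e.\ the i.i.d.-input achievable rate of the two-tap input-erasure Gaussian channel.

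Finally, applying Proposition~\ref{prop: two-tap IEGC rate} with these taps --- so that, by \eqref{eq: a b definitions}, $a=1+P(1+\alpha^2)$ and $b=P\alpha$ --- identifies this common value with $R_{\mathrm{2tap}}$ of \eqref{eq: two tap IEGC rate}, which is the assertion. The only mildly substantive step, and hence the one to treat with a little care, is the first: reducing the non-cooperative MAC sum-capacity optimization to the fixed i.i.d.\ covariance $P\Mat{I}_N$. This rests on monotonicity (and concavity) of the $\log\det$ together with the non-cooperation constraint that makes the covariance diagonal, and on the per-cell symmetry that makes equal powers optimal; everything after that is bookkeeping of matching matrices and a substitution into Proposition~\ref{prop: two-tap IEGC rate}.
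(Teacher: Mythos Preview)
Your proposal is correct and follows essentially the same route as the paper: the corollary is obtained by directly comparing the per-cell sum-rate expression \eqref{eq: MCP rate} with the i.i.d.-input rate \eqref{eq:  IEGC capacity implicit} for $L=1$, $h_0=1$, $h_1=\alpha$, and observing that they coincide term by term. The paper simply declares this ``evident'' from the comparison and states \eqref{eq: MCP rate} as the sum-capacity without further argument; your write-up adds the justification that non-cooperation forces a diagonal input covariance and that $\log\det$ monotonicity then pins each diagonal entry to $P$, which is a welcome bit of extra rigor but not a different approach.
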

Since the shoft-handoff model is equivalent in terms of its per-cell sum-rate to the input-erasure channel, all the results and conclusions reported in Section \ref{sec: IEGC rate analysis}  also hold here. In particular the high-SNR characterization reduces to
\begin{equation}
\begin{aligned}
\label{ }
\Sinf&= (1-q)\ \ \mathrm{and}\\ 
\Linf &= -q^2\sum_{n=1}^\infty (1-q)^{n-1}\log\left(\frac{1-\alpha^{2n}}{1-\alpha^2}\right)\ .
\end{aligned}
\end{equation}
The benefits of MCP are evident in the high-SNR regime since its rate has two-fold degrees of freedom relative to the ICFS rate, and $1/q$-fold degrees of freedom relative to the SCP rate. 

It is noted that the MCP rate is an increasing function of the inter-cell interference factor $\alpha$. This is since: (a) $r$ and $s$ are increasing with $\alpha$ ; (b) $r>s$ ; and (c) by recalling that
\begin{equation}
\frac{r^{n+1}-s^{n+1}}{r-s}=\sum_{m=0}^{n}r^{n-m}s^{m}\ ,
\end{equation}
which is a summation of multiplications of increasing functions of $\alpha$ and hence is an increasing function of $\alpha$. 

It is also worth mentioning that a similar analysis in \cite{Levy-Zeitouni-Shamai-2Tap_UP08} facilitates also to address the issue of successful transmission per a given user, and when the average rate per user is examined, the results are in agreement with Prop. \ref{eq: two tap IEGC rate} (or Corr.  \ref{corr: MCP corr}).

\subsection{Numerical Results}


The achievable rates of the MCP, SCP and ICFS schemes with $\alpha^2=0.5$ and or $P=14$ [dB], are plotted as functions of the erasure rate $q$, in Figure \ref{fig: Comp 2}. The figure demonstrate the superiority of MCP over both SCP and ICFS schemes. It is also observed that ICFS is beneficial over SCP for erasure rates smaller than a certain threshold.  


\begin{figure}
\begin{center}
\includegraphics[scale = \myscale]{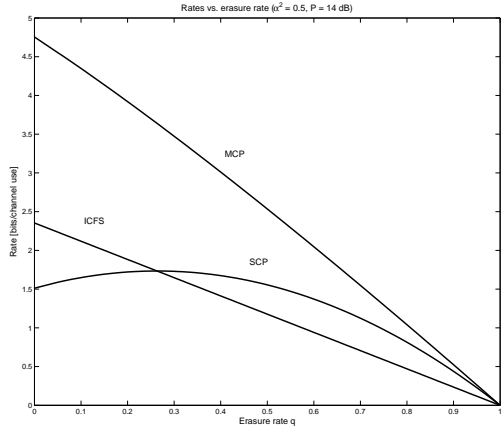}
\vspace{0cm}
\caption{Rates vs. erasure rate, for $\alpha^2=0.5$ and $P=14$ [dB].} \label{fig: Comp 2}
\end{center}
\end{figure}

\subsection{Discussion}

A different scenario which gives rise to a similar, but \textit{non-ergodic}, erasure model in a cellular scenario is one in which the MTs may be active or not in a given block with a given probability accounting for dynamic user activity. According to this model, each user is randomly selected to be active with probability $(1-q)$ throughout the entire transmission block. This may be implemented by means of control, or simply when the MT has no information to send. In these cases, the erasures are still i.i.d. along the cells' indices but are non-ergodic along the time index for each user. With optimal MCP, the \emph{throughput} \emph{per active user} is an r.v., given in the large system limit by
\begin{multline}
\label{eq: MCP rate user activity}
C_{\mathrm{mcp-ua}}=\lim_{N\rightarrow\infty}\\ \frac{1}{\tr{(\Mat{E}_N)}}\log \det{\left(\Mat{I}_{N+1}+
P\Mat{H}_{N+1}\Mat{E}_N\Mat{E}^\dagger_N\Mat{H}^\dagger_{N+1}\right)}\ ,
\end{multline}
where $\diag(\Mat{E}_N)=(e_1,\ldots,e_n)$ is a realization of the i.i.d. user-activity pattern.  Since $\frac{\tr{(\Mat{E}_N)}}{N}\xrightarrow[a.s.]{N\rightarrow\infty}(1-q)$ due to the strong law of large numbers, we can replace $\tr{(\Mat{E}_N)}$ with $(1-q)N$ in \eqref{eq: MCP rate user activity}. Following similar argumentations as those made in \cite{Levy-Zeitouni-Shamai_CLT-UP08}, it can be shown that $C_{\mathrm{mcp-ua}}$ converges a.s. to $(1-q)^{-1}C_{\mathrm{mcp}}$. Turning to the SCP and ICFS schemes it is easily verified that the throughputs per active user converge a.s. to the corresponding rates of the shadowing setup times $(1-q)^{-1}$.

\begin{figure}
\begin{center}
\includegraphics[scale = \myscale]{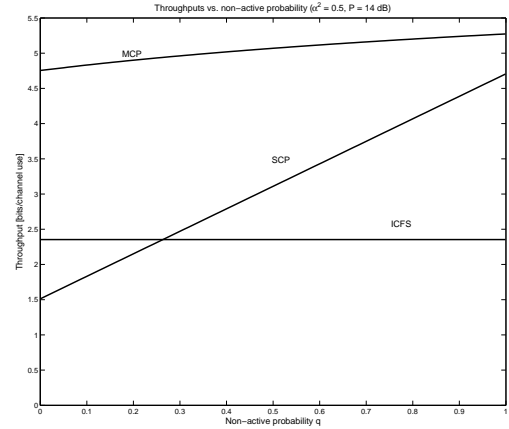}
\vspace{0cm}
\caption{Throughputs per  active user vs. non-active probability $q$, for $\alpha^2=0.5$ and $P=14$ [dB].} \label{fig: Comp 3}
\end{center}
\end{figure}

In Figure \ref{fig: Comp 3} the throughputs per active user of the MCP, SCP, and ICFS schemes are plotted as functions of the non-active probability $q$, for $\alpha^{2}=0.5$ and $P=14$ [dB]. The superiority of MCP over the other schemes is apparent. Also we can observe that both MCP and ICFS are robust and maintain their throughput constant (for ICFS) or almost constant (for MCP) regardless of $q$, while the throughput of SCP drops sharply with decreasing $q$. This is easily explained since neither MCP nor ICFS suffers from inter-cell interference, the first due to the fact that it leverages all received signals as useful, while the second since it avoids interference by frequency planning. On the other hand, SCP treats the other cell's signals as noise and hence, with more active users, more cells have to decrease their rate to overcome the emerging interference.

\section{Concluding Remarks}

In this work we have considered the input-erasure Gaussian channel. We have shown that with i.i.d. inputs, the input- and output-erasure \cite{Tulino-Verdu-Caire-Shamai_ISIT07} channels are equivalent. Focusing on the two-tap case, we have presented an alternative analysis to the one
reported in \cite{Tulino-Verdu-Caire-Shamai_ISIT07}. In particular, in contrast to \cite{Tulino-Verdu-Caire-Shamai_ISIT07} where the rate expression involves a fixed-point equation, the new expression is given in the form of an infinite sum, which enables simple calculation of arbitrarily accurate lower bounds. Moreover, the analysis is based on simple properties of tri-diagonal matrices and also holds for erasures with memory. An alternative simpler high-SNR rate power offset expression in the form of an infinite sum has also been presented, which contrasts to the one reported in \cite{Tulino-Verdu-Caire-Shamai_ISIT07} which involves a fixed-point equation. Numerical results have demonstrated the deleterious effects of increasing erasure rate and reveal the nature of the worst-case filter (coefficients with equal amplitudes) and that of the best-case filter (either coefficients is set to zero). The results have then been used to demonstrate the benefits of MCP over SCP schemes for a simple cellular uplink with shadowing fading or with user-activity model. Other settings that include independent shadowing to each BS, and more complex user-activity models, are currently being studied.

\appendix


We need to calculate
\begin{multline}
\label{eq: two-tap IEGC rate}
R=\lim_{N\rightarrow\infty}\\ \E\left[\frac{1}{N}\log \det{\left(\Mat{I}_{N+1}+P\Mat{H}_{N+1}\Mat{E}_N\Mat{E}^\dagger_N\Mat{H}^\dagger_{N+1}\right)}\right]\  .
\end{multline}
Using \eqref{eq: IEGC-OEGC equivalence}
and defining the $N\times N$ matrix
\begin{equation}
\label{eq: Cov matrix def}
\Mat{G}_N \triangleq  \Mat{E}^\dagger_N\Mat{H}^\dagger_{N+1}\Mat{H}_{N+1}\Mat{E}_N\ ,
\end{equation}
we can rewrite \eqref{eq: two-tap IEGC rate} as
\begin{equation}
\label{eq: two-tap IEGC rate G}
C=\lim_{N\rightarrow\infty}\E\left(\frac{1}{N}\log \det{\left(\Mat{I}_N+P\Mat{G}_N\right)}\right)\ .
\end{equation}

Straightforward calculations reveal that $\Mat{G}_N$ is given explicitly by
\begin{equation}
\label{ }
[\Mat{G}_N]_{n,m}=\left\{\begin{array}{cc}e_n(\abs{h_0}^2+\abs{h_1}^2) & n=m \\e_ne_{n+1} h_0 h_1^\dagger & n=m-1 \\e_{n-1}e_n h_0^\dagger h_1 & n=m+1 \\0 & \mathrm{otherwise}\end{array}\right.\ ,
\end{equation}
e.g., for the special case of $N=3$ we have
\begin{multline}
\label{ }
\Mat{G}_3 =\\ \scriptsize\left(\begin{array}{ccc}
     e_1(\abs{h_0}^2+\abs{h_1}^2) & e_1e_2 h_0 h_1^\dagger & 0 \\
     e_1e_2 h_0^\dagger h_1 & e_2(\abs{h_0}^2+\abs{h_1}^2) & e_2e_3 h_0 h_1^\dagger \\
     0 & e_2e_3 h_0^\dagger h_1 & e_3(\abs{h_0}^2+\abs{h_1}^2)
\end{array}\right).
\end{multline}
It is observed that $\Mat{G}_N$ is a tri-diagonal matrix (due to the two-tap filter), and that  an erasure of the $n$th input symbol, i.e., $e_n=0$, sets the $n$th row and the $n$th column of $G_N$ to zero. Hence, each erasure splits $\Mat{G}_N$ into two block-diagonal sub-matrices. The latter is easily explained by the following example. Assume $N=10$ and an erasure realization $\vct{e}'=(1,1, 0, 0,1,1,1,0,1,1)$. In this case we have
\begin{multline}
\label{eq: big mat}
\Mat{I}_{10}+P\Mat{G}_{10} = \\
\left(\begin{array}{cccccccccc}
a & c & 0 & 0 & 0 & 0 & 0 & 0 & 0 & 0\\
c^\dagger & a & 0 & 0 & 0 & 0 & 0 & 0 & 0 & 0\\
0 & 0 & 1 & 0 & 0 & 0 & 0 & 0 & 0 & 0\\
0 & 0 & 0 & 1 & 0 & 0 & 0 & 0 & 0 & 0\\
0 & 0 & 0 & 0 & a & c & 0 & 0 & 0 & 0\\
0 & 0 & 0 & 0 & c^\dagger & a & c & 0 & 0 & 0\\
0 & 0 & 0 & 0 & 0 & c^\dagger & a & 0 & 0 & 0\\
0 & 0 & 0 & 0 & 0 & 0 & 0 & 1 & 0 & 0\\
0 & 0 & 0 & 0 & 0 & 0 & 0 & 0 & a & c\\
0 & 0 & 0 & 0 & 0 & 0 & 0 & 0 & c^\dagger & a\\
\end{array}\right)\ ,
\end{multline}
where $a$ is defined in \eqref{eq: a b definitions}, and $c\triangleq Ph_0^\dagger h_1$.
Examining \eqref{eq: big mat} it is evident that
\begin{multline}
\label{ }
\log \det{(\Mat{I}_{10}+P\Mat{G}_{10})}=\\ 2 \log \det{(\Mat{I}_{2}+P\Mat{\bar{G}}_{2})} + \log \det{(\Mat{I}_{3}+P\Mat{\bar{G}}_{3})}\ ,
\end{multline}
where $\Mat{\bar{G}}_{n}\triangleq\Mat{H}^\dagger_{n+1}\Mat{H}_{n+1}$ is a tri-diagonal $n\times n$ Toeplitz matrix (equal to $\Mat{G}_n$ with no erasures). Turning to the general case
\begin{multline}\label{eq: two-tap IEGC term with sums}
\log\det{(\Mat{I}_{N}+P\Mat{G}_{N})} =\\  \sum_{n=1}^N  n(\vct{e'}) \log\det{(\Mat{I}_{n}+P\Mat{\bar{G}}_{n})},
\end{multline}
where $n(\vct{e}')$ is the number of $n$-length consecutive non-erasure sub-sequences in $\vct{e}'$. Using \eqref{eq: two-tap IEGC term with sums}, we can express the rate for any block size $N$ as 
\begin{equation}
\label{eq: rate N raw}
\begin{aligned}
R_N&=\E\left[\sum_{n=1}^N  \frac{n(\vct{e})}{N} \log\det{(\Mat{I}_{n}+P\Mat{\bar{G}}_{n})}\right]\\
&=\sum_{n=1}^N  \E\left[\frac{n(\vct{e})}{N}\right] \log\det{(\Mat{I}_{n}+P\Mat{\bar{G}}_{n})}\ .
\end{aligned}
\end{equation}
Defining $\mathrm{1}_{\{e:\ n,m\}}$ as an indicator function of the event that an isolated sequence of $n$ non-erasures starts at index $m$ in $\vct{e}$ we get that
\begin{equation}
\label{eq: consec ones prob}
\E\left[\frac{n(\vct{e})}{N}\right]=\E\left[\frac{1}{N}\sum_{m=1}^N \mathbf{1}_{\{e:\ n,m\}}\right]=
\left(1-\frac{n}{N}\right)q^2(1-q)^n\ ,
\end{equation}
where the last equality is due to the fact that the erasures are i.i.d, hence\footnote{For simplicity maters we neglect the fact that for $m=1$, $\E\left[\mathbf{1}_{\{e:\ n,1\}}\right]=q(1-q)^n$ and claime it has no effect in the large system limit.}
\begin{equation}
\label{ }
\E\left[\mathbf{1}_{\{e:\ n,m\}}\right] =\left\{\begin{array}{cc}q^2(1-q)^n & m=2,\ldots,N-n \\
0 & \mathrm{otherwise}\end{array}\right.\ .
\end{equation}

Next, we  calculate $\det{(\Mat{I}_{n}+P\Mat{\bar{G}}_{n})}\ ;\ n=1,\ldots,N$, by recalling the fact that the determinant of any tri-diagonal ${n \times n}$ matrix ${\Mat{V}}$ obeys the following \emph{recursive} relation
\begin{multline}
\label{eq: recursive relations}
 \det{\Mat{V}_n}=  [\Mat{V}_n]_{n,n}\det{\Mat{V}_{n-1}}\\ - [\Mat{V}_n]_{n,n-1} [\Mat{V}_n]_{n-1,n}\det{\Mat{V}_{n-2}}\ \  ;\ \ n\ge 3\ ,
\end{multline}
where $\Mat{V}_{n-1}$, $\Mat{V}_{n-2}$ are the first two main \emph{principal sub-matrices} of $\Mat{V}_{n}$ (composed of the first $n-1$ and $n-2$ rows and columns of $\Mat{V}_n$, respectively). Applying the recursive formula to  $\Mat{D}_n \triangleq \Mat{I}_{n}+P\Mat{\bar{G}}_{n}$ we get the following \emph{difference} equation:
\begin{equation}
\label{ }
\det{(\Mat{D}_n)}=a\det{(\Mat{D}_{n-1})}-b^2\det{(\Mat{D}_{n-2})}\ ;\ n\ge2\ ,
\end{equation}
with initial conditions $\det{(\Mat{D}_0)} = 1$ and $\det{(\Mat{D}_1)} = a$, where $a$ and $b$ are defined in \eqref{eq: a b definitions}. A solution to a difference equation of this form is given by
\begin{equation}
\label{ }
 \det{(\Mat{D}_n)} = \varphi r^n - \phi s^n\ ,
\end{equation}
where $r$ and $s$ are defined in \eqref{eq: r s definitions}. The initial condition are used to calculate $\varphi$ and $\phi$:
\begin{equation}
\label{eq: diff consts}
\varphi = \frac{r}{r-s}\quad\mathrm{and}\quad \phi=\frac{s}{r-s}\  ;
\end{equation}
hence,
\begin{equation}
\label{eq: diff solution}
\det{(\Mat{D}_n)}=\frac{r^{n+1}-s^{n+1}}{r-s}\ ;\ n=1,\ldots,N\ .
\end{equation}
Substituting \eqref{eq: consec ones prob} and \eqref{eq: diff solution} into \eqref{eq: rate N raw} we get that
\begin{equation}
\label{ }
R_N=\sum_{n=1}^N q^2(1-q)^2\left(1-\frac{n}{N}\right)\log \left(\frac{r^{n+1}-s^{n+1}}{r-s}\right)\ .
\end{equation}

It remains to show that $R_N\xrightarrow{N\rightarrow\infty}R$. To do that we notice that
\begin{multline}
\label{eq: Finite rate with R}
R_N = R - \frac{1}{N}\sum_{n=1}^N q^2(1-q)^n n \log\det{(\Mat{D}_n)}\ - \\ \sum_{n=N+1}^\infty q^2(1-q)^n \log\det{(\Mat{D}_n)}\ .
\end{multline}
Hence, $R_N$ can be bounded by
\begin{equation}
\label{ }
R - \frac{\beta}{N}\sum_{n=1}^\infty q^2(1-q)^n\ n^2 - \beta \sum_{n=N}^\infty q^2(1-q)^n n \le R_N \le R\ .
\end{equation}
This is since the summands in \eqref{eq: Finite rate with R} are positive, and by noticing that  $\log \det{(\Mat{D}_n)}\le n\beta$ (for some positive $\beta>0$). After some algebra we get that 
\begin{equation}
\label{ }
R - \frac{2\beta(1-q)}{Nq} - \beta(1-q)^N\left((N+1)q+1)\right) \le R_N \le R\ .
\end{equation}
Finally, taking $N\rightarrow\infty$ completes the proof. 


\section*{Acknowledgment}

This work was supported by a Marie Curie Outgoing International
Fellowship and the NEWCOM++ network of excellence both within the 6th and 7th
European Community Framework Programmes, the National Science Foundation under
Grants CNS-06-26611 and CNS-06-25637, and the REMON consortium for wireless
communication.


\begin{thebibliography}{10}
\providecommand{\url}[1]{#1}
\csname url@samestyle\endcsname
\providecommand{\newblock}{\relax}
\providecommand{\bibinfo}[2]{#2}
\providecommand{\BIBentrySTDinterwordspacing}{\spaceskip=0pt\relax}
\providecommand{\BIBentryALTinterwordstretchfactor}{4}
\providecommand{\BIBentryALTinterwordspacing}{\spaceskip=\fontdimen2\font plus
\BIBentryALTinterwordstretchfactor\fontdimen3\font minus
  \fontdimen4\font\relax}
\providecommand{\BIBforeignlanguage}[2]{{%
\expandafter\ifx\csname l@#1\endcsname\relax
\typeout{** WARNING: IEEEtran.bst: No hyphenation pattern has been}%
\typeout{** loaded for the language `#1'. Using the pattern for}%
\typeout{** the default language instead.}%
\else
\language=\csname l@#1\endcsname
\fi
#2}}
\providecommand{\BIBdecl}{\relax}
\BIBdecl

\bibitem{Dey-Oppenheim-ICASSP07}
S.~Dey and A.~Oppenheim, ``Frequency-shaped randomized sampling,'' in
  \emph{IEEE International Conference on Acoustics, Speech and Signal
  Processing (ICASSP'07)}, vol.~3, Cambridge, MA, 2007, pp. 1493--1496.

\bibitem{Dey-Russell-Oppenheim-SP06}
S.~Dey, A.~Russell, and A.~Oppenheim, ``Precompensation for anticipated
  erasures in lti interpolation systems,'' \emph{IEEE Transactions on Signal
  Processing}, vol.~54, no.~1, pp. 325--335, Jan. 2006.

\bibitem{Julian-ISIT02}
D.~Julian, ``Erasure networks,'' in \emph{Proc. of the {IEEE} International
  Symposium on Information Theory (ISIT'02)}, Lausanne, Switzerland, Jun. 30 --
  Jul. 5 2002, p. 138.

\bibitem{Verdu-Weissman-IT08}
S.~Verd\'{u} and T.~Weissman, ``The information lost in erasures,'' \emph{IEEE
  Transactions on Information Theory}, to appear.

\bibitem{Tulino-Verdu-Caire-Shamai_ISIT07}
A.~Tulino, S.~Verd\'{u}, G.~Caire, and S.~{Shamai (Shitz)}, ``The {Gaussian}
  erasure channel,'' in \emph{Proc. of the {IEEE} International Symposium on
  Information Theory (ISIT'07)}, Nice, France, 2007, pp. 1721--1725.

\bibitem{Tulino-Verdu-Caire-Shamai-ISIT08}
------, ``Intersymbol interference with flat fading: Channel capacity,'' in
  \emph{Proc. International Symposium on Information Theory}, Toronto, Ontario,
  Canada,, Jul. 6-11 2008.

\bibitem{Somekh-Simeone-Barness-Haimovich-Shamai-BookChapt-07}
O.~Somekh, O.~Simeone, Y.~{Bar-Ness}, A.~M. Haimovich, U.~Spagnolini, and
  S.~{Shamai (Shitz)}, \emph{Distributed Antenna Systems: Open Architecture for
  Future Wireless Communications}.\hskip 1em plus 0.5em minus 0.4em\relax
  Auerbach Publications, CRC Press, May 2007, ch. An Information Theoretic View
  of Distributed Antenna Processing in Cellular Systems.

\bibitem{Shamai-Somekh-Zaidel-JWCC-2004}
S.~{Shamai (Shitz)}, O.~Somekh, and B.~M. Zaidel, ``Multi-cell communications:
  An information theoretic perspective,'' in \emph{Proceedings of the Joint
  Workshop on Communications and Coding (JWCC'04)}, Donnini, Florence, Italy,
  Oct.14--17, 2004.

\bibitem{Levy-Zeitouni-Shamai-2Tap_UP08}
N.~Levy, O.~Zeitouni, and S.~{Shamai (Shitz)}, ``Information theoretic aspects
  of users' activity in a {Wyner}-like cellular model,'' 2008, submitted to
  {\it IEEE Transactions on Information Theory}.

\bibitem{Julian-ISIT03}
D.~Julian, ``Dependent and {Gaussian} erasure networks,'' in \emph{Proc. IEEE
  International Symposium on Information Theory (ISIT'03)}, Yokohama, Japan,
  Jun. 29--Jul. 4 2003, p. 375.

\bibitem{Somekh-Zaidel-Shamai-IT-2007-CWIT-2005}
O.~Somekh, B.~M. Zaidel, and S.~{Shamai (Shitz)}, ``Sum rate characterization
  of joint multiple cell-site processing,'' \emph{IEEE Transactions on
  Information Theory}, vol.~53, no.~12, pp. 4473--4497, Dec. 2007, see also:
  --- \ in Proc. of the {\it Canadian Workshop on Information Theory
  (CWIT'05)}, Montreal, Quibec, Canada, Jun. 5--8, 2005.

\bibitem{Lozano-Tulino-Verdu-high-SNR-IT05}
A.~Lozano, A.~Tulino, and S.~Verd\'{u}, ``High-{SNR} power offset in
  multi-antenna communications,'' \emph{IEEE Transactions on Information
  Theory}, vol.~51, no.~12, pp. 4134--4151, Dec. 2005.

\bibitem{Wyner-94}
A.~D. Wyner, ``Shannon-theoretic approach to a {G}aussian cellular
  multiple-access channel,'' \emph{IEEE Transactions on Information Theory},
  vol.~40, no.~6, pp. 1713--1727, Nov. 1994.

\bibitem{Shamai-Wyner-97-I}
S.~{Shamai (Shitz)} and A.~D. Wyner, ``Information-theoretic considerations for
  symmetric, cellular, multiple-access fading channels - {P}art {I},''
  \emph{IEEE Transactions on Information Theory}, vol.~43, no.~6, pp.
  1877--1894, Nov. 1997.

\bibitem{Levy-Zeitouni-Shamai_CLT-UP08}
N.~Levy, O.~Zeitouni, and S.~{Shamai (Shitz)}, ``Central limit theorem and
  large deviations of the fading {Wyner} cellular model via product of random
  matrices theory,'' 2008, submitted to the {\it IEEE Transactions on
  Information Theory}.

\end{thebibliography}

\bibliographystyle{IEEEtran}

\end{document}